\tikzstyle{hyb}=[rectangle,draw,minimum size=3mm,inner sep=0.2pt]
\tikzstyle{tre}=[circle,draw,minimum size=3mm,inner sep=0.2pt]
\tikzstyle{treg}=[circle,draw,minimum size=4mm,inner sep=0.2pt]
\newcommand{\etq}[1]{%
\draw (#1) node {\tiny $#1$};
}
\newcommand{\RR}{\mathbb R}
\renewcommand{\leq}{\leqslant}
\renewcommand{\geq}{\geqslant}
\begin{document}

\title{The Fair Proportion is  a  Shapley Value  on phylogenetic networks too}

\author{Tom\'as M. Coronado \and Gabriel Riera \and Francesc Rossell\'o}

\institute{Dept. of Mathematics and Computer Science, University of the Balearic Islands, E-07122 Palma, Spain, and Balearic Islands Health Research Institute (IdISBa), E-07010 Palma, Spain
\texttt{$\mathtt{\{}$t.martinez,gabriel.riera,cesc.rossello$\mathtt{\}}$@uib.edu}}
\maketitle

\begin{abstract}
The Fair Proportion of a species in a phylogenetic tree is a very simple measure that has been used to assess its value relative to the overall phylogenetic  diversity represented by the tree. It has recently been proved by Fuchs and Jin to be equal to the Shapley Value of the coallitional game that sends each subset of species to its rooted Phylogenetic Diversity in the tree. We prove in this paper that this result extends to the natural translations of the Fair Proportion and the rooted Phylogenetic Diversity to rooted phylogenetic networks. We also generalize to rooted phylogenetic networks the expression for the Shapley Value of the unrooted Phylogenetic Diversity game on a phylogenetic tree established by Haake, Kashiwada and Su. \end{abstract}

\section{Introduction}

An important problem in ecology is to assess the genetic value of individual species, with the aim of ranking them for conservation prioritization purposes \cite{Diniz}. One of the simplest measures proposed in this connection is the Fair Proportion of a species in a phylogenetic tree, introduced by  Redding and Mooers in \cite{RM}. This index apportions  the overall diversity of a phylogenetic tree among its leaves  by equally dividing the weight of each arc  among its descendant leaves. Although this index is  very easy to define, it is not obvious at first sight that it defines a sound and meaningful ranking of the species' genetic value. On the other hand, the Shapley Value of a species in a  phylogenetic tree, introduced by  Haake, Kashiwada and Su in \cite{Haake1}, which is based on a well-known solution from cooperative game theory to the problem of dividing the global value of a game among its players,
 lies at the other end of the individual biodiversity measures spectrum, in the sense that it provides a meaningful distribution of the global diversity of a phylogenetic tree among its leaves at the cost of being defined through quite a complex formula,  involving a sum of an exponential number of terms. 
But, in what Steel dubs as an ``interesting and not immediately obvious'' result \cite[p. 141]{Steelbook},  Fuchs and Jin proved in  \cite{FJ} that Fair Proportions and Shapley Values are exactly the same on phylogenetic trees, thus yielding an individual biodiversity index  which is easy to define and compute and which ranks species in a very clear and meaningful way.

In this note we extend Fuchs and Jin's result from rooted phylogenetic trees to rooted phylogenetic networks \cite{Huson},  graphical models of evolutionary histories that allow the inclusion of reticulate processes like hybridizations, recombinations or lateral gene transfers. More specifically, we show that if we define the Fair Proportion of a leaf in a rooted phylogenetic network exactly as if we were a phylogenetic tree ---we split the weight of each arc equally among all its descendant leaves,  and then we add up the leaf's share of the weights of all its ancestor arcs--- then it is equal to the subnet Shapley Value of the leaf in the network as defined by Wicke and Fischer in \cite{WF2}. We also extend to rooted phylogenetic networks 
the simple expression for the unrooted phylogenetic Shapley Value established by  Haake, Kashiwada and Su in \cite{Haake1}, thus showing in particular that it can  be computed efficiently also on rooted phylogenetic networks.

\section{Preliminaries}
Let  $\Sigma$ be a finite set of labels. A \emph{$\Sigma$-rDAG} is a rooted directed acyclic graph  with its \emph{leaves} (its nodes of out-degree 0) bijectively labeled in $\Sigma$. 
We shall denote the sets of nodes and arcs of a $\Sigma$-rDAG $N$ by $V(N)$ and $E(N)$, respectively, and  we shall always identify its leaves  with their  corresponding labels.  A \emph{weighted} $\Sigma$-rDAG 
is a $\Sigma$-rDAG endowed with a mapping $\omega:E(N) \to [0,\infty)$ that assigns a weight $\omega(e)\geq 0$ to every arc~$e$.

Given two nodes $u,v$ in a $\Sigma$-rDAG $N$, we say that $v$ is a \emph{child} of $u$, and also that $u$ is a \emph{parent} of $v$, when $(u,v)\in E(N)$, and that $v$ is a  \emph{descendant} of $u$, and also that $u$ is an  \emph{ancestor} of $v$, when there exists a directed path from $u$ to $v$ in $N$. The \emph{cluster} $C(e)$ of $e\in E(N)$ is the set of descendant leaves of its end, and  we shall denote by $\kappa(e)$  the cardinal of $C(e)$. If $a\in C(e)$, we shall also say that $e$ is an \emph{ancestor arc} of $a$.

A \emph{phylogenetic network} on $\Sigma$ is a $\Sigma$-rDAG without \emph{elementary} nodes (that is, without nodes of in-degree $\leq 1$ and out-degree 1).   A node in a phylogenetic network is of \emph{tree type} when its in-degree is 0 (the \emph{root}) or 1, and a 
 \emph{reticulation} when its in-degree is at least 2. An arc is  of \emph{tree type} (respectively, of \emph{reticulate type}) when its end  is a tree node (resp., in a reticulation). 
Given a phylogenetic network $N$ on $\Sigma$ and a subset $X\subseteq \Sigma$, we shall denote by  $N(X)$  the subgraph of $N$ induced by the set of  all the ancestors of the leaves in $X$: it is a $X$-rDAG,  with the same root as $N$.

A \emph{phylogenetic tree} is a phylogenetic network without reticulations. Let us emphasize, hence, that all our phylogenetic trees are  rooted, unless otherwise explicitly stated.
 Given a weighted phylogenetic tree  $T$ on $\Sigma$, for every $a\in \Sigma$ and for every $X\subseteq \Sigma$:
\begin{itemize}
\item The \emph{Fair Proportion} of $a$ in $T$   {\cite{RM}} is 
$$
FP_T(a)=\sum_{e:\, a\in C(e)}  \frac{\omega(e)}{\kappa(e)}.
$$ 

\item The  \emph{rooted Phylogenetic Diversity} $rPD_T(X)$ of $X$ in $T$ \cite{Faith92} is the \emph{total weight} of $T(X)$, that is, the sum of the weights of its arcs:
$$
rPD_T(X)=\sum_{e:\, X\cap C(e)\neq \emptyset} \omega(e).
$$

\item The \emph{unrooted Phylogenetic Diversity} $uPD_T(X)$ of $X$ in $T$ \cite{Faith92} is the total weight of the smallest unrooted subtree of $T$ containing the leaves in $X$, or equivalently, the total weight of the subtree of $T(X)$ rooted at the lowest common ancestor $LCA_T(X)$ of $X$. 
\end{itemize}

A \emph{coalitional game} on a set $\Sigma$ is simply a set function
$W:\mathcal{P}(\Sigma)\to \RR$. 
For every $a\in \Sigma$, the \emph{Shapley Value} on $a$ of a coalitional game $W$ on $\Sigma$ \cite{Shapley} is a weighted average of the marginal contribution of $a$ to the value, under $W$, of each coalition $X\subseteq \Sigma$:
$$
SV_{a}(W) =\sum_{a\in X\subseteq \Sigma}\frac{(|X|-1)!(|\Sigma|-|X|)!}{|\Sigma|!}\big(W(X)-W(X\setminus\{a\})\big)
$$
The \emph{Shapley value} of the game $W$ is then the vector $(SV_{a}(W))_{a\in \Sigma}$.

\section{The Fair Proportion is a Shapley Value}

Let  $N$ be a weighted phylogenetic network on $\Sigma$. 
We define the \emph{Fair Proportion}  of  $a\in \Sigma$ in $N$, $FP_N(a)$, exactly as if $N$ were a phylogenetic tree: we split the weight of each arc equally among all its descendant leaves,  and then we add up $a$'s share of the weights of all its ancestor arcs:
$$
FP_N(a)=\sum_{e:\, a\in C(e)}  \frac{\omega(e)}{\kappa(e)}.
$$ 
In particular, if $N$ is a phylogenetic tree, this Fair Proportion is equal to the one defined on phylogenetic trees by Redding and Mooers and recalled in the previous section.
Our goal in this section is to show that, as it already happens on phylogenetic trees \cite{FJ},
 this Fair Proportion  is the Shapley Value of a certain ``phylogenetic diversity'' coallitional game: namely, of   $rPSD_N: \mathcal{P}(\Sigma)\to \RR$ on $\Sigma$, where, for every  $X\subseteq \Sigma$, $rPSD_N(X)$ is the 
 \emph{rooted Phylogenetic Subnet Diversity} of $X$ in $N$ in the sense of  \cite[Def. 6]{WF2}, which is defined as the total weight of $N(X)$:
$$
rPSD_N(X)=\sum_{e:\, X\cap C(e)\neq \emptyset} \omega(e);
$$
in particular,  $rPSD_N(\emptyset)=0$.
Notice that if $T$ is a phylogenetic tree, then $rPSD_T$ is equal to Faith's rooted Phylogenetic Diversity $rPD_T$ recalled in the previous section.

For every $a\in \Sigma$, let its \emph{rooted subnet Shapley Value} in $N$ be the Shapley Value of $rPSD_N$ on $a$:
$$
SV_N(a)=\sum_{a\in X\subseteq \Sigma} \frac{(|X|-1)!(|\Sigma|-|X|)!}{|\Sigma|!}\big(rPSD_N(X)-rPSD_N(X\setminus \{a\})\big).
$$

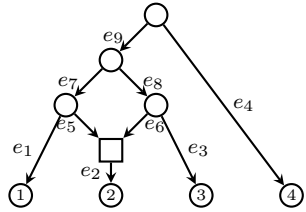
\begin{figure}[htb]
\begin{center}
\begin{tikzpicture}[thick,>=stealth,scale=0.3]
\draw(0,0) node[tre] (1) {}; \etq{1}
\draw(4,0) node[tre] (2) {}; \etq{2}
\draw(8,0) node[tre] (3) {}; \etq{3}
\draw(12,0) node[tre] (4) {}; \etq 4
\draw(4,2) node[hyb] (h1) {}; 
\draw(2,4) node[tre] (a) {}; 
\draw(6,4) node[tre] (b) {}; 
\draw(4,6) node[tre] (c) {}; 
\draw(6,8) node[tre] (r) {}; 
\draw[->](r) --node[midway, left]{\scriptsize $e_9$}(c);
\draw[->](r)--node[midway, right]{\scriptsize $e_4$}(4);
\draw[->](c)--node[midway, left]{\scriptsize $e_7$}(a);
\draw[->](c)--node[midway, right]{\scriptsize $e_8$}(b);
\draw[->](a)--node[midway, left]{\scriptsize $e_1$}(1);
\draw[->](a)--node[midway, left]{\scriptsize $e_5$}(h1);
\draw[->](h1)--node[midway, left]{\scriptsize $e_2$}(2);
\draw[->](b)--node[midway, right]{\scriptsize $e_6$}(h1);
\draw[->](b)--node[midway, right]{\scriptsize $e_3$}(3);
\end{tikzpicture}
\end{center}
\caption{\label{fig:1} The phylogenetic network used in Example \ref{ex:1}.}
\end{figure}

\begin{example}\label{ex:1}
Consider the phylogenetic network $N$ depicted in Figure \ref{fig:1} and let $w_i=\omega(e_i)$, for every $i=1,\ldots,9$. Then:
$$
\begin{array}{l}
\kappa(e_1)=\kappa(e_2)=\kappa(e_3)=\kappa(e_4)=\kappa(e_5)=\kappa(e_6)=1\hspace*{1.5cm}\\
\kappa(e_7)=\kappa(e_8)=2,\ \kappa(e_9)=3\\[1ex]
rPSD_N(1)=w_1+w_7+w_9\\
rPSD_N(2)=w_2+w_5+w_6+w_7+w_8+w_9\\
rPSD_N(3)=w_3+w_8+w_9\\
rPSD_N(4)=w_4\\
rPSD_N(1,4)=w_1+w_4+w_7+w_9\\
\end{array}
$$

$$
\begin{array}{l}
rPSD_N(2,3)=w_2+w_3+w_5+w_6+w_7+w_8+w_9\\
rPSD_N(2,4)=w_2+w_4+w_5+w_6+w_7+w_8+w_9\\
rPSD_N(3,4)=w_3+w_4+w_8+w_9\\
rPSD_N(1,2,3)=w_1+w_2+w_3+w_5+w_6+w_7+w_8+w_9\\
rPSD_N(1,2,4)=w_1+w_2+w_4+w_5+w_6+w_7+w_8+w_9\\
rPSD_N(1,3,4)=w_1+w_3+w_4+w_7+w_8+w_9\\
rPSD_N(2,3,4)=w_2+w_3+w_4+w_5+w_6+w_7+w_8+w_9\\
\end{array}
$$
So, the Shapley Values of the leaves of $N$ are:
$$
\begin{array}{l}
SV_N(1)\displaystyle =\frac{1}{4}\big(rPSD_N(1)-rPSD_N(\emptyset)\big)\\[1.5ex]
\displaystyle\qquad+\frac{1}{12}\big(rPSD_N(1,2)-rPSD_N(2)+rPSD_N(1,3)-rPSD_N(3)\\[1.5ex]
\displaystyle\qquad\qquad\quad+rPSD_N(1,4)-rPSD_N(4)\big)\\[1.5ex]
\displaystyle\qquad +\frac{1}{12}\big(rPSD_N(1,2,3)-rPSD_N(2,3)+rPSD_N(1,2,4)\\[1.5ex]
\displaystyle\qquad\qquad\quad-rPSD_N(2,4)+rPSD_N(1,3,4)-rPSD_N(3,4)\big)\\[1.5ex]
\displaystyle\qquad +
\frac{1}{4}\big(rPSD_N(1,2,3,4)-rPSD_N(1,2,3)\big)\\[1.5ex]
\displaystyle\quad =\frac{1}{4}(w_1+w_7+w_9)+\frac{1}{12}(3w_1+2w_7+w_9)+\frac{1}{12}(3w_1+w_7)+\frac{1}{4}w_1\\[1.5ex]
\displaystyle\quad =w_1+\frac{1}{2}w_7+\frac{1}{3}w_9=\frac{\omega(e_1)}{\kappa(e_1)}+\frac{\omega(e_7)}{\kappa(e_7)}+\frac{\omega(e_9)}{\kappa(e_9)}=FP_N(1)
\\[2ex]
\displaystyle SV_N(3)=w_3+\frac{1}{2}w_8+\frac{1}{3}w_9=\frac{\omega(e_3)}{\kappa(e_3)}+\frac{\omega(e_8)}{\kappa(e_8)}+\frac{\omega(e_9)}{\kappa(e_9)}=FP_N(3)\\[1.5ex]
\qquad\mbox{ (by symmetry)}
\\[2ex]
SV_N(2)\displaystyle =\frac{1}{4}\big(rPSD_N(2)-rPSD_N(\emptyset)\big)\\[1.5ex]
\displaystyle\qquad+\frac{1}{12}\big(rPSD_N(1,2)-rPSD_N(1)+rPSD_N(2,3)\\[1.5ex]
\displaystyle\qquad\qquad\quad-rPSD_N(3)+rPSD_N(2,4)-rPSD_N(4)\big)\\[1.5ex]
\displaystyle\qquad +\frac{1}{12}\big(rPSD_N(1,2,3)-rPSD_N(1,3)+rPSD_N(1,2,4)\\[1.5ex]
\displaystyle\qquad\qquad\quad-rPSD_N(1,4)+rPSD_N(2,3,4)-rPSD_N(3,4)\big)\\[1.5ex]
\displaystyle\qquad+
\frac{1}{4}\big(rPSD_N(1,2,3,4)-rPSD_N(1,3,4)\big)\\[1.5ex]
\end{array}
$$
$$
\begin{array}{l}
\displaystyle\quad =\frac{1}{4}(w_2+w_5+w_6+w_7+w_8+w_9)\\[1.5ex]
\displaystyle\qquad +\frac{1}{12}(3w_2+3w_5+3w_6+2w_7+2w_8+w_9)\\[1.5ex]
\displaystyle\qquad
+\frac{1}{12}(3w_2+3w_5+3w_6+w_7+w_8)+\frac{1}{4}(w_2+w_5+w_6)\\[1.5ex]
\displaystyle \quad
=w_2+w_5+w_6+\frac{1}{2}w_7+\frac{1}{2}w_8+\frac{1}{3}w_9\\[1.5ex]
\displaystyle \quad= \frac{\omega(e_2)}{\kappa(e_2)}+\frac{\omega(e_5)}{\kappa(e_5)}+\frac{\omega(e_6)}{\kappa(e_6)}+\frac{\omega(e_7)}{\kappa(e_7)}+\frac{\omega(e_8)}{\kappa(e_8)}+\frac{\omega(e_9)}{\kappa(e_9)}=FP_N(2)
\\[2ex]
\end{array}
$$
$$
\begin{array}{l}
SV_N(4)\displaystyle =\frac{1}{4}\big(rPSD_N(4)-rPSD_N(\emptyset)\big)\\[1.5ex]
\displaystyle \qquad +\frac{1}{12}\big(rPSD_N(1,4)-rPSD_N(1)+rPSD_N(2,4)\\[1.5ex]
\displaystyle\qquad\qquad\quad-rPSD_N(2)+rPSD_N(3,4)-rPSD_N(3)\big)\\[1.5ex]
\displaystyle\qquad +\frac{1}{12}\big(rPSD_N(1,2,4)-rPSD_N(1,2)+rPSD_N(1,3,4)\\[1.5ex]
\displaystyle\qquad\qquad\quad -rPSD_N(1,3)+rPSD_N(2,3,4)-rPSD_N(2,3)\big)\\[1.5ex]
\displaystyle\qquad+
\frac{1}{4}\big(rPSD_N(1,2,3,4)-rPSD_N(1,2,3)\big)\\[1.5ex]
\displaystyle\quad =\frac{1}{4}w_4
+\frac{1}{12}\cdot 3w_4
+\frac{1}{12}\cdot 3w_4
+\frac{1}{4}w_4
=w_4=\frac{\omega(e_4)}{\kappa(e_4)}=FP_N(4)
\end{array}
$$
\end{example}

In the simple phylogenetic network considered in the previous example, the subnet Shapley Value of each leaf was equal to its Fair Proportion. Next theorem establishes that it is always the case.

\begin{theorem}\label{th:main}
For every weighted phylogenetic network $N$ on $\Sigma$ and for every $a\in \Sigma$,
$$
FP_N(a) = SV_N(a).
$$
\end{theorem}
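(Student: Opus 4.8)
The plan is to reduce the statement to the case of a single weighted arc by exploiting linearity. For a fixed phylogenetic network $N$ the clusters $C(e)$ and the numbers $\kappa(e)$ are fixed combinatorial data, and both $FP_N(a)$ and $SV_N(a)$ are \emph{linear} functions of the weight vector $\bigl(\omega(e)\bigr)_{e\in E(N)}$: the Fair Proportion visibly so, and the subnet Shapley Value because $rPSD_N(X)=\sum_{e:\,X\cap C(e)\neq\emptyset}\omega(e)$ depends linearly on $\omega$ while the Shapley Value depends linearly on the game. Since the weightings supported on a single arc span the space of all weightings (indeed they already positively generate the cone of admissible nonnegative weightings), it is enough to prove the theorem when $\omega(e_0)=1$ for one fixed arc $e_0$ and $\omega(e)=0$ for every $e\neq e_0$.

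In that situation $FP_N(a)=1/\kappa(e_0)$ if $a\in C(e_0)$ and $FP_N(a)=0$ otherwise, while $rPSD_N(X)$ becomes the indicator of the event ``$X\cap C(e_0)\neq\emptyset$''. Consequently the marginal term $rPSD_N(X)-rPSD_N(X\setminus\{a\})$ equals $1$ precisely when adjoining $a$ turns an $e_0$-avoiding coalition into an $e_0$-meeting one --- that is, precisely when $X\cap C(e_0)=\{a\}$, which in particular forces $a\in C(e_0)$ --- and it equals $0$ otherwise. Thus for $a\notin C(e_0)$ we immediately get $SV_N(a)=0=FP_N(a)$.

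For $a\in C(e_0)$, writing $n=|\Sigma|$ and $k=\kappa(e_0)$, the coalitions $X$ with $X\cap C(e_0)=\{a\}$ are exactly the sets $\{a\}\cup Y$ with $Y\subseteq\Sigma\setminus C(e_0)$; grouping them by $i=|Y|=|X|-1$ the Shapley sum collapses to
$$
SV_N(a)=\sum_{i=0}^{n-k}\binom{n-k}{i}\,\frac{i!\,(n-1-i)!}{n!}.
$$
It then remains to verify the elementary identity that this equals $1/k$: after the substitution $\ell=n-k-i$ each summand becomes $\frac{(n-k)!}{n!}\,(k-1)!\,\binom{k-1+\ell}{k-1}$, and the hockey-stick identity $\sum_{\ell=0}^{n-k}\binom{k-1+\ell}{k-1}=\binom{n}{k}$ finishes the computation. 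Hence $SV_N(a)=1/\kappa(e_0)=FP_N(a)$, which settles the single-arc case and, by linearity, the theorem. One could equivalently invoke the classical fact that the ``disjunction game'' sending a coalition $X$ to $1$ if $X\cap S\neq\emptyset$ and to $0$ otherwise has Shapley Value $1/|S|$ on each member of $S$.

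I expect no genuine obstacle here. The one point to get right conceptually is the linearity-in-weights reduction; once it is in place the tree structure plays no role whatsoever --- the quantities $FP_N$, $rPSD_N$ and hence $SV_N$ see only the clusters $C(e)$ and their cardinalities --- so the phylogenetic-network case is literally no harder than the tree case of Fuchs and Jin, and all that is left is the one-line binomial identity displayed above.
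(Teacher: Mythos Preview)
Your proof is correct and follows essentially the same route as the paper: both arguments identify the marginal $rPSD_N(X)-rPSD_N(X\setminus\{a\})$ as the sum of $\omega(e)$ over arcs $e$ with $X\cap C(e)=\{a\}$, and both then reduce to the identity $\sum_{i=0}^{n-k}\binom{n-k}{i}\,\frac{i!(n-1-i)!}{n!}=1/k$. The only differences are cosmetic---you isolate one arc at a time via linearity in $\omega$ and prove the identity by hockey-stick, whereas the paper carries all arcs together, swaps the order of summation over $e$ and over $|X|$, and cites the identity as Lemma~6.15 of Steel's book.
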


\begin{proof}
Set $|\Sigma|=n$. For every $X\subseteq \Sigma$ containing $a$,
$$
\begin{array}{l}
\displaystyle rPSD_N(X)-rPSD_N(X\setminus \{a\})  = 
\sum_{e:\, X\cap C(e)\neq \emptyset}\hspace*{-1ex} \omega(e)-\hspace*{-2ex}\sum_{e:\, (X\setminus \{a\})\cap C(e)\neq \emptyset}\hspace*{-2ex} \omega(e)\\[3ex]  \qquad\qquad \displaystyle =\sum_{e:\, X\cap C(e)=\{a\}}\hspace*{-1ex}\omega(e)
\end{array}
$$
Then,
$$
SV_N(a)=\sum_{k=1}^n\frac{(k-1)!(n-k)!}{n!}\hspace*{-2ex}\sum_{|X|=k, a\in X} \big(rPSD_N(X)-rPSD_N(X\setminus \{a\})\big)
$$
where
$$
\begin{array}{l}
\displaystyle \sum_{|X|=k, a\in X} (rPSD_N(X)-rPSD_N(X\setminus \{a\}))=\hspace*{-2ex}
\sum_{|X|=k, a\in X}\sum_{e:\, X\cap C(e)=\{a\}}\omega(e)\\[2ex]
\displaystyle\qquad\qquad
=\sum_{e:\, a\in C(e)}|\{Y\subseteq \Sigma\setminus C(e)\mid |Y|=k-1\}|\cdot \omega(e)\\[2ex]
\displaystyle\qquad\qquad
=\sum_{e:\, a\in C(e)}\binom{n-\kappa(e)}{k-1}\omega(e)
\end{array}
$$
and therefore
$$
\begin{array}{l}
SV_N(a)  \displaystyle =\sum_{k=1}^n\Bigg(\frac{(k-1)!(n-k)!}{n!}\sum_{e:\, a\in C(e)}\binom{n-\kappa(e)}{k-1}\omega(e)\Bigg)\\[2ex]
\qquad  \displaystyle =\sum_{e:\, a\in C(e)}\Bigg(\omega(e)\sum_{k=1}^n\frac{(k-1)!(n-k)!}{n!}\binom{n-\kappa(e)}{k-1}\Bigg)\\[2ex]
 \qquad  \displaystyle =\sum_{e:\, a\in C(e)}\Bigg(\omega(e)\sum_{j=0}^{n-1}\frac{j!(n-j-1)!}{n!}\binom{n-\kappa(e)}{j}\Bigg)  =\sum_{e:\, a\in C(e)}\frac{\omega(e)}{\kappa(e)}
 \end{array}
$$
where the last equality is a consequence of Lemma 6.15 in \cite{Steelbook}, which establishes that, for every $1\leq m\leq n$,
$$
\sum_{j=0}^{n-1} \frac{j!(n-j-1)!}{n!}\binom{n-m}{j}=\frac{1}{m}.
$$
\qed
\end{proof}

\begin{remark}
A \emph{multilabelled tree} (a \emph{MUL-tree}, for short) on $\Sigma$ is a rooted tree with its leaves labelled in $\Sigma$. The difference with usual phylogenetic trees is that  the leaf labelling in a MUL-tree need not be bijective and,  thus, more than one leaf may be assigned the same label. MUL-trees include  \emph{area cladograms} \cite{ganapathy.ea:TCBB06} and \emph{gene trees} \cite{Gregg}.  Given a MUL-tree $T$, if, for every label $a\in \Sigma$ assigned to more than one leaf, we remove all leaves labelled with $a$ and the arcs ending in them and we add a new reticulation $h_a$, a new leaf labelled with $a$, new arcs from the parents of former leaves labelled with $a$ to $h_a$ and a new arc $(h_a,a)$, we obtain a phylogenetic network uniquely determined by $T$, which we dub  \emph{associated} to $T$. For instance, the phylogenetic network in Figure \ref{fig:1} is the associated to the MUL-tree depicted in Figure \ref{fig:multree}.
 
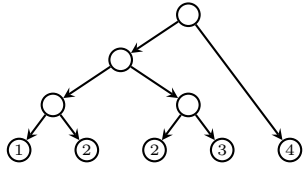
\begin{figure}[htb]
\begin{center}
\begin{tikzpicture}[thick,>=stealth,scale=0.3]
\draw(0,0) node[tre] (1) {}; \etq{1}
\draw(3,0) node[tre] (2) {}; \etq{2}
\draw(6,0) node[tre] (5) {}; 
\draw (5) node {\tiny $2$};
\draw(9,0) node[tre] (3) {}; \etq{3}
\draw(12,0) node[tre] (4) {}; \etq 4
\draw(1.5,2) node[tre] (a) {}; 
\draw(7.5,2) node[tre] (b) {}; 
\draw(4.5,4) node[tre] (c) {}; 
\draw(7.5,6) node[tre] (r) {}; 
\draw[->](r) --(c);
\draw[->](r)--(4);
\draw[->](c)--(a);
\draw[->](c)--(b);
\draw[->](a)--(1);
\draw[->](a)--(2);
\draw[->](b)--(5);
\draw[->](b)--(3);
\end{tikzpicture}
\end{center}
\caption{\label{fig:multree} A multilabelled tree.}
\end{figure}

This representation  of MUL-trees as phylogenetic networks allows us to translate to their setting the concepts developed so far. So, let $T$ be a weighted MUL-tree.  For every arc $e$ in it,  let $C(e)$ be the set of labels of its descendant leaves and $\kappa(e)=|C(e)|$ the number of different labels assigned to  descendant leaves of $e$. For every label $a\in \Sigma$,  we define its \emph{Fair Proportion} in $T$ as $FP_T(a)=\sum_{e: a\in C(e)} \omega(e)/\kappa(e)$: notice that now we split each $\omega(e)$ equally among the different labels of $e$'s descendant leaves, without taking into account their multiplicities, that is, how many leaves have any given label. Then, if, for every $X\subseteq \Sigma$, we define its \emph{MUL-Phylogenetic Diversity} in $T$ as
$$
mPSD_T(X)=\sum_{e:\, X\cap C(e)\neq \emptyset} \omega(e),
$$
Theorem \ref{th:main} applied to the phylogenetic network associated to $T$ implies that $FP_T$ is the Shapley Value of $mPSD_T$.
\end{remark}

\section{The unrooted subnet Shapley Value on a rooted phylogenetic network}

Consider the following two further coallitional games asssociated to a phylogenetic network $N$ on $\Sigma$: for every  $X\subseteq \Sigma$,
\begin{itemize}
\item The \emph{Cophenetic Value}  $CV_N(X)$ (cf.  \cite{copheneticd1,sokal.roth:62}) is 0 if $X=\emptyset$ and  the sum of the weights of the arcs that are ancestors of all leaves in $X$ otherwise:
$$
CV_N(\emptyset)=0\mbox{ and }CV_N(X)=\sum_{e:\, X\subseteq C(e)} \omega(e) \mbox{ if $X\neq\emptyset$}.
$$

\item The \emph{unrooted Phylogenetic Subnet Diversity}  $uPSD_N(X)$ is the difference
$$
uPSD_N(X)=rPSD_N(X)-CV_N(X)=\sum_{e:\, X\cap C(e)\neq \emptyset\atop X\not\subseteq C(e)} \omega(e).
$$
\end{itemize}
So, if $T$ is a phylogenetic tree, then $CV_T$ is equal to the usual cophenetic value of a set $X$ of leaves, that is, the total weight of the path going from the root of $T$ to $LCA_T(X)$, and  $uPSD_T$ is equal to Faith's unrooted Phylogenetic Diversity $uPD_T$ as recalled in \S 2.

For every $a\in \Sigma$, let its \emph{unrooted subnet Shapley Value} in $N$ be the Shapley Value of $uPSD_N$ on $a$,
$$
uSV_N(a)=\hspace*{-2ex} \sum_{a\in X\subseteq \Sigma} \frac{(|X|-1)!(|\Sigma|-|X|)!}{|\Sigma|!}\big(uPSD_N(X)-uPSD_N(X\setminus \{a\})\big),
$$
and its \emph{cophenetic Shapley Value} in $N$, the Shapley Value of $CV_N$ on $a$,
$$
cSV_N(a)=\hspace*{-2ex}\sum_{a\in X\subseteq \Sigma} \frac{(|X|-1)!(|\Sigma|-|X|)!}{|\Sigma|!}\big(CV_N(X)-CV_N(X\setminus \{a\})\big),
$$
By the additivity of Shapley Values, $rPSD_N=uPSD_N+CV_N$ implies that
$$
SV_N=uSV_N+cSV_N.
$$

Our goal is to obtain an expression for $uSV_N$ that generalizes to rooted phylogenetic networks the expression for $uPD_N$ on phylogenetic trees established in \cite{Haake1}. We do it using Theorem \ref{th:main} and the following expression for $cSV_N$.

\begin{lemma}\label{lem:coph}
For every weighted phylogenetic network $N$ on $\Sigma$ and for every $a\in \Sigma$,
$$
cSV_N(a) =\frac{1}{n}rPSD_N(\Sigma)-\sum_{e:\, a\notin C(e)}\frac{\omega(e)}{n-\kappa(e)}.
$$
\end{lemma}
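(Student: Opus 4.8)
The plan is to expand $cSV_N(a)$ directly from its definition, regroup the resulting double sum by arcs, and reduce everything to the same combinatorial identity (Lemma 6.15 in \cite{Steelbook}) already used in the proof of Theorem \ref{th:main}. Write $n=|\Sigma|$ and recall that $CV_N(X)=\sum_{e:\,X\subseteq C(e)}\omega(e)$ when $X\neq\emptyset$ while $CV_N(\emptyset)=0$. First I would compute the marginal contributions $CV_N(X)-CV_N(X\setminus\{a\})$ for $a\in X$, distinguishing two cases: if $X=\{a\}$ then $X\setminus\{a\}=\emptyset$ and the contribution is exactly $CV_N(\{a\})=\sum_{e:\,a\in C(e)}\omega(e)$; if $|X|\geq 2$ then both $X$ and $X\setminus\{a\}$ are nonempty, the set of arcs $e$ with $X\subseteq C(e)$ is contained in the set of those with $X\setminus\{a\}\subseteq C(e)$, and the difference equals $-\sum_{e:\,a\notin C(e),\ X\setminus\{a\}\subseteq C(e)}\omega(e)$.

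Next I would substitute these expressions into the Shapley formula. The $X=\{a\}$ summand contributes $\tfrac1n\sum_{e:\,a\in C(e)}\omega(e)$, its coefficient being $0!(n-1)!/n!=\tfrac1n$. In the remaining sum over $X$ with $a\in X$ and $|X|\geq 2$ I would swap the order of summation so that the outer sum runs over the arcs $e$ with $a\notin C(e)$: setting $Y=X\setminus\{a\}$, the weight $\omega(e)$ appears, with sign $-1$ and coefficient $|Y|!(n-|Y|-1)!/n!$, exactly once for each nonempty $Y\subseteq C(e)$. Since $a\in\Sigma\setminus C(e)$ forces $\kappa(e)<n$, grouping these $Y$ by cardinality turns the inner sum into $\sum_{j=1}^{\kappa(e)}\binom{\kappa(e)}{j}\,j!(n-j-1)!/n!$.

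The step I expect to be the main obstacle — though it is really just bookkeeping — is evaluating this inner sum. I would apply Lemma 6.15 in \cite{Steelbook} with its parameter $m$ set to $n-\kappa(e)$ (which lies between $1$ and $n$), getting $\sum_{j=0}^{n-1}\binom{\kappa(e)}{j}\,j!(n-j-1)!/n!=\tfrac1{n-\kappa(e)}$; since $\binom{\kappa(e)}{j}=0$ for $j>\kappa(e)$ and the $j=0$ term equals $\tfrac1n$, the inner sum equals $\tfrac1{n-\kappa(e)}-\tfrac1n$. Hence the $|X|\geq 2$ part of $cSV_N(a)$ is $-\sum_{e:\,a\notin C(e)}\omega(e)\bigl(\tfrac1{n-\kappa(e)}-\tfrac1n\bigr)$.

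Finally I would add the two contributions:
$$
\begin{aligned}
cSV_N(a)&=\frac1n\sum_{e:\,a\in C(e)}\omega(e)+\frac1n\sum_{e:\,a\notin C(e)}\omega(e)-\sum_{e:\,a\notin C(e)}\frac{\omega(e)}{n-\kappa(e)}\\
&=\frac1n\sum_{e\in E(N)}\omega(e)-\sum_{e:\,a\notin C(e)}\frac{\omega(e)}{n-\kappa(e)},
\end{aligned}
$$
and it only remains to note that $C(e)\neq\emptyset$ for every arc $e$ (following out-arcs from the end of $e$ eventually reaches a leaf, as $N$ is finite and acyclic and its out-degree-$0$ nodes are exactly the leaves), so $\sum_{e\in E(N)}\omega(e)=rPSD_N(\Sigma)$, which is the claimed identity. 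A slightly slicker variant avoids peeling off the $X=\{a\}$ term: introduce the auxiliary game $\widetilde{CV}_N(X)=\sum_{e:\,X\subseteq C(e)}\omega(e)$ defined on all of $\mathcal{P}(\Sigma)$, so that $\widetilde{CV}_N(\emptyset)=rPSD_N(\Sigma)$; then $\widetilde{CV}_N-CV_N$ is the game supported on $\emptyset$ with value $rPSD_N(\Sigma)$, whose Shapley value on $a$ is $-\tfrac1n\,rPSD_N(\Sigma)$, while every marginal contribution of $\widetilde{CV}_N$ has the uniform shape above, so a single application of Lemma 6.15 in \cite{Steelbook} (now with nothing to peel off) gives $SV_a(\widetilde{CV}_N)=-\sum_{e:\,a\notin C(e)}\omega(e)/(n-\kappa(e))$; additivity of the Shapley value then finishes the proof.
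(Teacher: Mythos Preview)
Your proof is correct and follows essentially the same approach as the paper's: compute the marginal contributions of $CV_N$ (singling out $X=\{a\}$), swap the order of summation to group by arcs $e$ with $a\notin C(e)$, and evaluate the resulting inner sum via Lemma~6.15 in \cite{Steelbook} with $m=n-\kappa(e)$. The only cosmetic difference is bookkeeping --- the paper rewrites $CV_N(\{a\})$ as $rPSD_N(\Sigma)-\sum_{e:\,a\notin C(e)}\omega(e)$ up front and absorbs the $-\tfrac1n\sum_{e:\,a\notin C(e)}\omega(e)$ piece as the $k=1$ term of the main sum, whereas you keep the two contributions separate and merge them only at the end (and your optional auxiliary-game variant is a pleasant extra).
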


\begin{proof}
Set $|\Sigma|=n$. To simplify the notations, we shall omit the subscripts $N$ in $CV_N$ and $cSV_N$. For every $\{a\}\subsetneq X\subseteq \Sigma$
$$
CV(X)-CV(X\setminus \{a\})  = 
\sum_{e:\, X\subseteq C(e)}\hspace*{-1ex} \omega(e)-\hspace*{-2ex}\sum_{e:\, (X\setminus \{a\})\subseteq C(e)}\hspace*{-2ex} \omega(e)=-\hspace*{-3ex}\sum_{e:\, (X\setminus \{a\})\subseteq C(e)\atop \quad a\notin C(e)}\hspace*{-1ex}\omega(e)
$$

while
$$
CV(\{a\})-CV(\emptyset)=\sum_{e:\, a\in C(e)}\hspace*{-1ex} \omega(e)=
rPSD_N(\Sigma)-\hspace*{-2ex}\sum_{e: a\notin C(e)}\hspace*{-1ex}\omega(e)
$$
Then,
$$
\begin{array}{rl}
cSV(a) & \displaystyle =\sum_{k=1}^n\frac{(k-1)!(n-k)!}{n!}\hspace*{-2ex}\sum_{|X|=k, a\in X} \big(CV(X)-CV(X\setminus \{a\})\big)\\
& \displaystyle =\frac{1}{n}(CV(\{a\})-CV(\emptyset))\\ &\qquad\displaystyle
+\sum_{k=2}^n\frac{(k-1)!(n-k)!}{n!}\hspace*{-2ex}\sum_{|X|=k, a\in X} \big(CV(X)-CV(X\setminus \{a\})\big)\\
\end{array}
$$
where, for every $k\geq 2$,
$$
\begin{array}{l}
\displaystyle \sum_{|X|=k, a\in X} (CV(X)-CV(X\setminus \{a\}))=
-\hspace*{-2ex}\sum_{|X|=k, a\in X}\sum_{e:\, (X\setminus \{a\})\subseteq C(e)\atop \quad a\notin C(e)}\hspace*{-1ex}\omega(e)
\\[2ex]
\displaystyle\qquad\quad
=-\hspace*{-2ex}\sum_{e:\, a\notin C(e)}\hspace*{-2ex}|\{Y\subseteq C(e)\mid |Y|=k-1\}|\cdot \omega(e)=-\hspace*{-2ex}\sum_{e:\, a\notin C(e)}\binom{\kappa(e)}{k-1}\omega(e)
\end{array}
$$
and therefore
$$
\begin{array}{l}
cSV(a)  \displaystyle =\frac{1}{n}\Bigg(rPSD_N(\Sigma)-\hspace*{-2ex}\sum_{e: a\notin C(e)}\hspace*{-1ex}\omega(e)\Bigg)\\
\displaystyle\qquad\qquad\qquad -\sum_{k=2}^n\Bigg(\frac{(k-1)!(n-k)!}{n!}\sum_{e:\, a\notin C(e)}\binom{\kappa(e)}{k-1}\omega(e)\Bigg)\\[2ex]
\qquad \displaystyle =\frac{1}{n}rPSD_N(\Sigma)-\sum_{k=1}^n\Bigg(\frac{(k-1)!(n-k)!}{n!}\sum_{e:\, a\notin C(e)}\binom{\kappa(e)}{k-1}\omega(e)\Bigg)\\[2ex]
\qquad  \displaystyle =\frac{1}{n}rPSD_N(\Sigma)-\sum_{e:\, a\notin C(e)}\Bigg(\omega(e)\sum_{k=1}^n\frac{(k-1)!(n-k)!}{n!}\binom{\kappa(e)}{k-1}\Bigg)\\[2ex]
 \qquad  \displaystyle =\frac{1}{n}rPSD_N(\Sigma)-\sum_{e:\, a\notin C(e)}\frac{\omega(e)}{n-\kappa(e)}
 \end{array}
$$
using again Lemma 6.15 in \cite{Steelbook}.
\qed
\end{proof}

Replacing the expressions for $cSV_N$ and $rCSV_N$ given in Theorem \ref{th:main} and the last lemma, respectively, in $uCSV_N=rCSV_N-cSV_N$, we obtain the following result:

\begin{theorem}\label{cor:unr}
For every weighted phylogenetic network $N$ on $\Sigma$ and for every $a\in \Sigma$,
$$
uSV_N(a) =\frac{1}{n}\sum_{e:\, a\in C(e)}\frac{n-\kappa(e)}{\kappa(e)}\cdot \omega(e)+\frac{1}{n}\sum_{e:\, a\notin C(e)}\frac{\kappa(e)}{n-\kappa(e)}\cdot \omega(e).
$$
\end{theorem}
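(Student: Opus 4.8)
The plan is to derive the formula for $uSV_N(a)$ directly from the additivity relation $SV_N = uSV_N + cSV_N$, which follows from $rPSD_N = uPSD_N + CV_N$ and the linearity of the Shapley Value in the game. Rearranging gives $uSV_N(a) = SV_N(a) - cSV_N(a)$, and we already have closed-form expressions for both terms on the right: Theorem~\ref{th:main} gives $SV_N(a) = FP_N(a) = \sum_{e:\,a\in C(e)} \omega(e)/\kappa(e)$, and Lemma~\ref{lem:coph} gives $cSV_N(a) = \frac{1}{n}rPSD_N(\Sigma) - \sum_{e:\,a\notin C(e)} \omega(e)/(n-\kappa(e))$. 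So the entire proof is a substitution followed by an algebraic simplification.

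The key step is to rewrite $\frac{1}{n}rPSD_N(\Sigma)$ in a form compatible with the other sums. Since $rPSD_N(\Sigma) = \sum_{e\in E(N)} \omega(e)$ (every arc has $\Sigma \cap C(e) \neq \emptyset$), I would split this sum according to whether $a \in C(e)$ or not:
$$
\frac{1}{n}rPSD_N(\Sigma) = \frac{1}{n}\sum_{e:\,a\in C(e)} \omega(e) + \frac{1}{n}\sum_{e:\,a\notin C(e)} \omega(e).
$$
Then
$$
uSV_N(a) = \sum_{e:\,a\in C(e)} \frac{\omega(e)}{\kappa(e)} - \frac{1}{n}\sum_{e:\,a\in C(e)} \omega(e) - \frac{1}{n}\sum_{e:\,a\notin C(e)} \omega(e) + \sum_{e:\,a\notin C(e)} \frac{\omega(e)}{n-\kappa(e)}.
$$
Grouping the two sums over arcs $e$ with $a \in C(e)$ gives the coefficient $\frac{1}{\kappa(e)} - \frac{1}{n} = \frac{n-\kappa(e)}{n\,\kappa(e)}$, and grouping the two sums over arcs with $a \notin C(e)$ gives the coefficient $\frac{1}{n-\kappa(e)} - \frac{1}{n} = \frac{\kappa(e)}{n(n-\kappa(e))}$. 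Factoring out $\frac{1}{n}$ in both groups yields exactly the claimed expression.

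I do not anticipate a genuine obstacle here, since both ingredients are already established; the only point requiring a little care is bookkeeping around the edge cases of $\kappa(e)$. For an arc $e$ with $a \in C(e)$ we have $1 \le \kappa(e) \le n$, and when $\kappa(e) = n$ the corresponding term contributes $0$, consistent with the formula; for an arc with $a \notin C(e)$ we have $1 \le \kappa(e) \le n-1$, so $n - \kappa(e) \ge 1$ and no division by zero occurs in the second sum. (The denominator $n-\kappa(e)$ appearing in Lemma~\ref{lem:coph} is likewise well-defined for exactly the same reason.) Thus the substitution is valid term by term, and the theorem follows.
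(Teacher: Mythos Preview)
Your proposal is correct and follows exactly the same route as the paper: the authors simply state that the formula is obtained by substituting the expressions from Theorem~\ref{th:main} and Lemma~\ref{lem:coph} into $uSV_N = SV_N - cSV_N$, without spelling out the algebra. Your write-up in fact supplies more detail than the paper does, including the split of $rPSD_N(\Sigma)$ and the edge-case checks on $\kappa(e)$.
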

It is not difficult to check that this expression agrees with the one given in \cite[Thm. 4]{Haake1} when $N$ is a rooted phylogenetic network.

\section{Conclusions}
In this note we have generalized to rooted phylogenetic networks two results on Shapley Values for phylogenetic trees: the equality of the rooted phylogenetic Shapley Value to the Fair Proportion and the simple expression of  the unrooted phylogenetic Shapley Value in terms of the weights and the number of descendant leaves of  arcs. 

We would like to call the reader's attention on the fact that Theorem \ref{th:main} is easily generalized to coallitional games $W:\mathcal{P}(\Sigma)\to \RR$ for which there exist a set $E$ and two mappings $C:E\to \mathcal{P}(\Sigma)$ and $\omega: E\to \RR$ such that 
$$
W(X)=\sum_{e:\, X\cap C(e)\neq \emptyset} \omega(e).
$$
For such a game $W$, the proof of Theorem \ref{th:main} \textsl{mutatis mutandis} shows that its Shapley value on $a\in \Sigma$  is simply
$$
SV_a(W)=\sum_{e:\, a\in C(e)} \frac{\omega(e)}{|C(e)|}.
$$

For instance, a Shapley Value of this type can be used to assess the importance of a question in an exam, one of the main goals of Item Response Theory \cite{IRT}, as follows. Let $\Sigma$ be the set of questions in an exam and let $E$ be the set of students taking this exam. We assume all questions in the exam to be  worth the same score. For every student $e$, let $C(e)$ be the set of questions correctly answered in her exam and set $\omega(e)=1/|E|$. For every set of questions $X$, let $W(X)=\sum_{e:\, X\cap C(e)\neq \emptyset} \omega(e)$, which is equal to the fraction of students that answered correctly some question in $X$.  Then, as we have just seen, the Shapley Value of this game on a given question $a$ is 
$$
SV_a(W)=\frac{1}{|E|}\sum_{e:\, a\in C(e)} \frac{1}{|C(e)|}.
$$
This Shapley Value measures the contribution of question $a$ to the global success in the exam; 
it increases with the number of students who answered the question correctly, but decreases with the grades they obtained.

 If different questions may have different scores, then it would be sensible to take as $\omega(e)$ the total score of the exam divided by $|E|$, in which case
$W(X)$, for a set of questions $X$, would be the average grade of the students who answered correctly some question in $X$. For another, recent use of the Shapley Value in the classification of items in an exam, see \cite{Luts}.
\medskip

\noindent\textbf{Acknowledgements.}
This research was partially supported by the Spanish Ministry of Economy and Competitiveness and the ERDF through project DPI2015-67082-P (MINECO/FEDER).  We thank G. Valiente and I. Garc\'\i a for their  helpful suggestions on several aspects of this paper.

%:Bibliografia

\end{document}